\title{Temperature-Aware Phase-shift Design of LC-RIS for Secure Communication}
\author{
\IEEEauthorblockN{Mohamadreza Delbari$^1$, Bowu Wang$^1$, Nairy Moghadas Gholian$^1$, Arash Asadi$^2$, and Vahid Jamali$^1$}
\IEEEauthorblockA{
$^1$Technical University of Darmstadt (TUD), Darmstadt, Germany, $^2$Delft University of Technology, Delft, Netherlands
\vspace{-3mm}
\thanks{Delbari and Jamali’s work was supported in part by the Deutsche Forschungsgemeinschaft (DFG, German Research Foundation) within the Collaborative Research Center MAKI (SFB 1053, Project-ID 210487104) and in part by the LOEWE initiative (Hesse, Germany) within the emergenCITY center. Moghadas Gholian and Asadi's work was funded by the German Research Foundation (DFG) through the project HyRIS (Grant no. 455077022).}
}
}
\newtheorem{lem}{Lemma}
\newcommand{\defeq}{\triangleq}
\def\bOmega{\boldsymbol{\Omega}}
\newcommand{\e}{\mathsf{e}}
\newcommand{\jj}{\mathsf{j}}
\newcommand{\Herm}{\mathsf{H}}
\newcommand{\Trans}{\mathsf{T}}
\newcommand{\x}{\mathsf{x}}
\newcommand{\y}{\mathsf{y}}
\newcommand{\z}{\mathsf{z}}
\newcommand{\bA}{\mathbf{A}}
\newcommand{\bx}{\mathbf{x}}
\newcommand{\bs}{\mathbf{s}}
\newcommand{\bS}{\mathbf{S}}
\newcommand{\bH}{\mathbf{H}}
\newcommand{\ba}{\mathbf{a}}
\newcommand{\bu}{\mathbf{u}}
\newcommand{\bh}{\mathbf{h}}
\newcommand{\bg}{\mathbf{g}}
\newcommand{\bq}{\mathbf{q}}
\newcommand{\bp}{\mathbf{p}}
\newcommand{\kk}{\kappa}
\newcommand{\bSigma}{\boldsymbol{\Sigma}}
\newcommand{\bGamma}{\boldsymbol{\Gamma}}
\newcommand{\bmu}{\boldsymbol{\mu}}
\newcommand{\blambda}{\boldsymbol{\lambda}}
\newcommand{\Ex}{\mathbb{E}}
\newcommand{\diag}{\mathrm{diag}}
\newcommand{\real}{\mathrm{Re}}
\newcommand{\imag}{\mathrm{Im}}
\newcommand{\tr}{\mathrm{tr}}
\newcommand{\rank}{\mathrm{rank}}
\newcommand{\dd}{\mathrm{d}}
\newcommand{\tx}{\mathrm{tx}}
\newcommand{\SNR}{\mathrm{SNR}}
\newcommand{\RS}{\mathrm{SR}}
\newcommand{\RIS}{\mathrm{RIS}}
\def\scat{\mathrm{scr}}
\def\bomega{\boldsymbol{\omega}}
\def\bzero{\boldsymbol{0}}
\def\bone{\boldsymbol{1}}
\def\Cset{\mathbb{C}}
\def\Rset{\mathbb{R}}
\def\LOS{\mathrm{LOS}}
\def\nLOS{\mathrm{nLOS}}
\def\tmax{\mathrm{max}}
\def\tx{\mathrm{tx}}
\def\rx{\mathrm{rx}}
\def\BS{\mathrm{BS}}
\def\RIS{\mathrm{RIS}}
\def\NF{\mathrm{NF}}
\def\SNR{\mathrm{SNR}}
\def\eff{\mathrm{eff}}
\def\sCN{\mathcal{CN}}
\def\Pset{\mathcal{P}}
\def\bigO{\mathcal{O}}
\newacronym{RIS}{RIS}{reconfigurable intelligent surface}
\newacronym{QoS}{QoS}{quality of service}
\newacronym{LC}{LC}{liquid crystal}
\newacronym{SNR}{SNR}{signal to noise ratio}
\newacronym{TDMA}{TDMA}{time-division multiple-access}
\newacronym{BS}{BS}{base station}
\newacronym{MU}{MU}{mobile user}
\newacronym{ME}{ME}{mobile eavesdropper}
\newacronym{NF}{NF}{near-field}
\newacronym{Tx}{Tx}{transmitter}
\newacronym{Rx}{Rx}{receiver}
\newacronym{AWGN}{AWGN}{additive white Gaussian noise}
\newacronym{w.r.t.}{w.r.t.}{with respect to}
\newacronym{RDE}{RDE}{Reaction-Diffusion Equation}
\newacronym{PDE}{PDE}{partial differential equation}
\newacronym{UPA}{UPA}{uniform planar array}
\newacronym{AO}{AO}{alternative optimization}
\newacronym{SOCP}{SOCP}{second-order cone programming}
\newacronym{AoD}{AoD}{angle of departure}
\newacronym{LOS}{LOS}{line of sight}
\newacronym{nLOS}{nLOS}{non-LOS}
\newacronym{MIMO}{MIMO}{multiple-input multiple-output}
\newacronym{RS}{SR}{secure rate}
\newacronym{SDP}{SDP}{semi-definite programming}
\newacronym{6G}{6G}{sixth generation}
\newacronym{CSI}{CSI}{channel state information}
\newacronym{PIN}{PIN}{positive-intrinsic-negative}
\newacronym{RF}{RF}{radio frequency}
\newacronym{MEMS}{MEMS}{micro-electro-mechanical system}
\newacronym{mmWave}{mmWave}{millimeter wave}
\begin{document}

\maketitle

\begin{abstract}
\Gls{LC} technology enables low-power and cost-effective solutions for implementing the \gls{RIS}. However, the phase-shift response of \gls{LC}-\gls{RIS}s is temperature-dependent, which, if unaddressed, can degrade the performance. This issue is particularly critical in applications such as secure communications, where variations in phase-shift response may lead to significant information leakage. In this paper, we consider secure communication through an \gls{LC}-\gls{RIS} and developed a temperature-aware algorithm adapting the \gls{RIS} phase shifts to thermal conditions. Our simulation results demonstrate that the proposed algorithm significantly improves the secure data rate compared to scenarios where temperature variations are not accounted for.
\end{abstract}
\glsresetall
\section{Introduction}
\Glspl{RIS} are potential technology of \gls{6G} wireless communications, enabling programmable radio environments \cite{qingqing2019IRS,di2019smart,yu2021smart,najafi2020physics}.
\Gls{LC} technology has been recently studied as a cost-effective and energy-efficient solution for \gls{RIS} implementation, particularly for \gls{mmWave} communication systems \cite{zografopoulos2019liquid,aboagye2022design}. \Gls{LC}s and \gls{LC}-\gls{RIS}s have been investigated in the literature from both experimental and theoretical perspectives  \cite{aboagye2022design,zografopoulos2019liquid,neuder2023compact,jimenez2023reconfigurable,wang2004correlations,Wang2005,delbari2024fast}. For example, Neuder \textit{et al.} \cite{neuder2023compact} demonstrated an experimental design of an \gls{LC}-\gls{RIS}, while Aboagye \textit{et al.} \cite{aboagye2022design} focused on its applications in visible light communication. Additionally, Jim{\'e}nez-S{\'a}ez \textit{et al.} \cite{jimenez2023reconfigurable} provided a comprehensive review of key characteristics of \gls{LC}-\gls{RIS}, including power consumption and cost, and compared these with other related technologies. Building on the works by Wang \textit{et al.} \cite{wang2004correlations,Wang2005}, where the equations for the response time of liquid crystals were derived, the authors in \cite{delbari2024fast} formulated an optimization problem to reduce the switching time response of \gls{LC}-\gls{RIS} systems. These works highlight the growing focus on addressing the efficiency and practicality of \gls{LC} technologies in \gls{RIS}-assisted systems.

\gls{LC}-based \gls{RIS}s are inherently temperature-dependent, as they rely on the mechanical reorientation of \gls{LC} molecules to produce different phase shifts. 
This dependence could lead to significant performance degradation, particularly in the context of physical layer security, due to unintended information leakage. This paper addresses the impact of temperature fluctuations on \gls{LC}-\gls{RIS} for secure communications, focusing on mitigating the effects of temperature-induced information leakage. To the best of the authors' knowledge, the impact of temperature variations on \gls{LC}-\gls{RIS}s has not been investigated in the literature, yet. Our key contributions are as follows:


\begin{figure}
    \centering
    \includegraphics[width=0.4\textwidth]{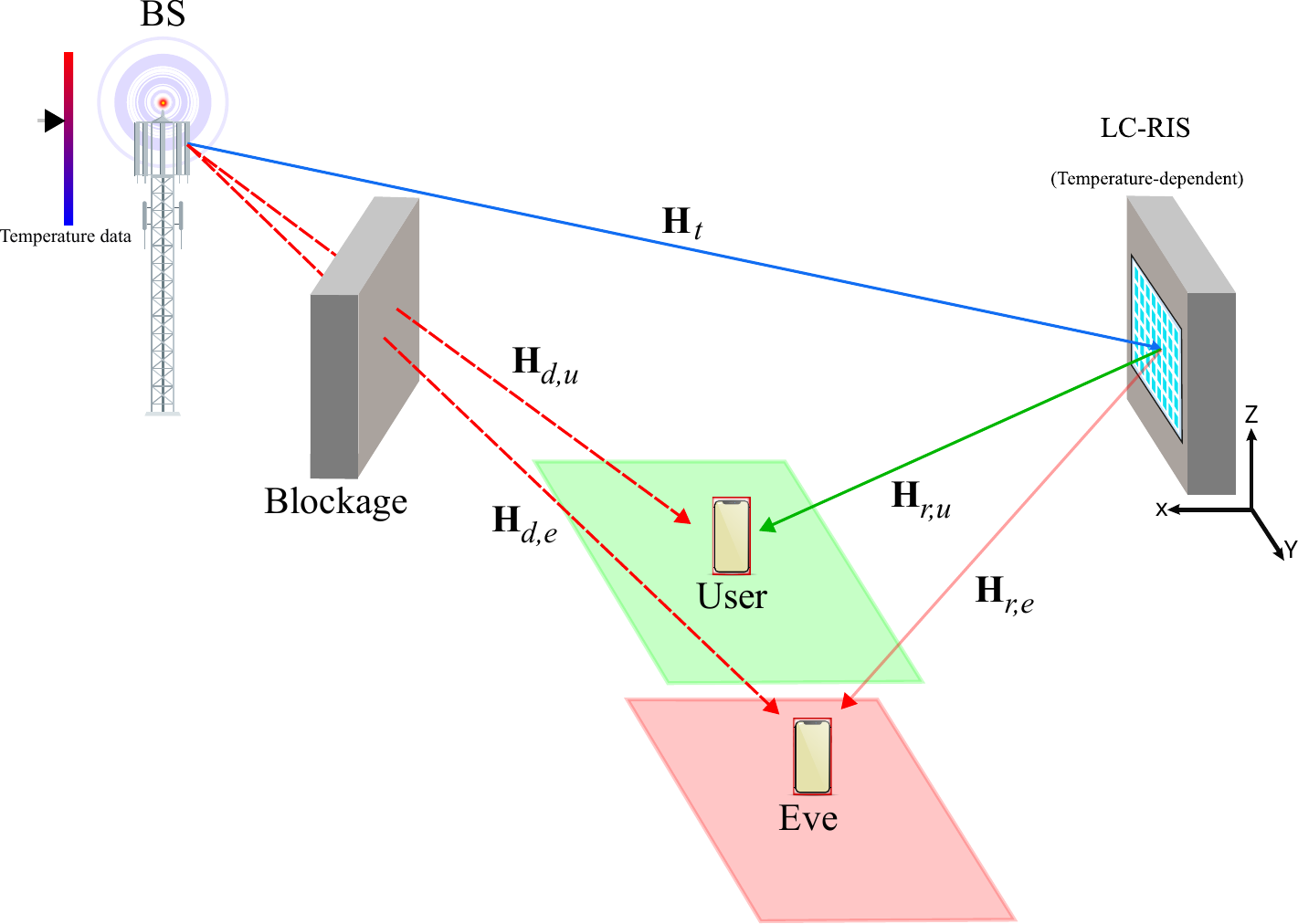}
    \caption{A schematic of a wireless channel model where an \gls{BS} serves a legitimate user via an \gls{RIS} trying to decrease the received signal by an eavesdropper.}
    \label{fig:system model}
    \vspace{-0.5 cm}
\end{figure}

\begin{itemize}
    \item First, we develop a model that accounts for the impact of temperature variations on the phase-shift response of \gls{LC}-\gls{RIS}. We show that temperature fluctuations can influence the range of achievable differential phase shifts $[0-\Delta\omega_{\max}]$, with $\Delta\omega_{\max}$ being less than $2\pi$ in certain cases.
    \item Next, we propose a novel secrecy setup in which we consider a scenario involving an LC-RIS-enhanced system with a legitimate user and a mobile eavesdropper. In this setup, we assume that the exact location of the eavesdropper and its instantaneous \gls{CSI} are unavailable; instead, only the eavesdropper's location with a certain vicinity is known. This is a practical yet challenging assumption, as it reflects real-world scenarios where the RIS must ensure secure communication without precise knowledge of the eavesdropper's location.
    \item Then, we formulate an optimization problem to design phase shift of the \gls{LC}-\gls{RIS} according to the limitation of the phase shift range. Since this problem is non-convex, we propose several reformulations of the problem, which allow us to develop an efficient sub-optimal solution based on rank relaxation and semidefinite programming.
    \item Finally, we evaluate the performance of the proposed algorithm compared to the case \gls{RIS} neglects the impact of the changing temperature. Our simulation results reveal that the secure rate significantly improves when the impact of the temperature variations is accounted for in the \gls{LC}-\gls{RIS} phase-shift design.
\end{itemize}

\textit{Notation:} Bold capital and small letters are used to denote matrices and vectors, respectively.  $(\cdot)^\Trans$, $(\cdot)^\Herm$, $\rank(\cdot)$, and $\tr(\cdot)$ denote the transpose, Hermitian, rank, and trace of a matrix, respectively. Moreover, $\diag(\bA)$ is a vector that contains the main diagonal entries of matrix $\bA$, $\bone_n$ and $\bzero_n$ denote column vectors of size $n$ whose elements are all ones and zeros, respectively. $\|\bA\|_*=\sum_i \sigma_i$, $\|\bA\|_2=\max_i \sigma_i$, $\|\bA\|_F$, and $\blambda_{\max}(\bA)$ denote the respectively nuclear, spectral, and Frobenius norms of a Hermitian matrix $\bA$, and eigenvector associated with the maximum eigenvalue of matrix $\bA$, where $\sigma_i,\,\,\forall i$, are the singular values of $\bA$. Furthermore, $[\bA]_{m,n}$ and $[\ba]_{n}$ denote the element in the $m$th row and $n$th column of matrix $\bA$ and the $n$th entry of vector $\ba$, respectively. $x^+$ denotes as $\max\{x,0\}$. Moreover, $\Rset$ and $\Cset$ represent the sets of real and complex numbers, respectively, $\jj$ is the imaginary unit, and $\Ex\{\cdot\}$  represents expectation. $\mathrm{rand}(N)$ denotes a $N\times1$ vector where each element is generated independently and uniformly from 0 to 1. $\mathcal{CN}(\bmu,\bSigma)$ denotes a complex Gaussian random vector with mean vector $\bmu$ and covariance matrix $\bSigma$. Finally, $\bigO(\cdot)$ represents the big-O notation and $|\Pset|$ is the cardinality of set $\Pset$.

\section{System, LC, and Secrecy rate Model}
\label{section 2}
In this section, we first present the system model for both legitimate user and mobile eavesdropper. Subsequently, we describe the phase-shift model for \gls{LC}-\gls{RIS}. Finally, we introduce the secrecy rate considered in this paper.

\subsection{System and channel models}
\label{system model}
In this paper, we study a narrow-band downlink communication scenario including an \gls{BS} comprising $N_t$ \gls{Tx} antennas, an \gls{RIS} with $N$ \gls{LC}-based unit cells, a single antenna legitimate user, and a single antenna eavesdropper. 
The receive signal at legitimate \gls{MU} and \gls{ME} are
\begin{align}
\label{Eq:system model user}
	y_g = \big(\bh_{d,g}^\Herm + \bh_{r,g}^\Herm \bGamma \bH_t \big) \bx +n_g,\quad g=\{u,e\},
\end{align}
where $\bx\in\Cset^{N_t}$ is the transmit signal vector for the legitimate user, $y_u\in\Cset$ and $y_e\in\Cset$ are the received signal vector at the legitimate user and eavesdropper, respectively, and  $n_u\in\Cset$ ($n_e\in\Cset$) represents the \gls{AWGN} at the legitimate user (eavesdropper), i.e., $n_u,n_e\sim\sCN(0,\sigma_n^2)$, where $\sigma_n^2$ is the noise power. Assuming linear beamforming, the transmit vector $\bx$ can be written as $\bx=\bq s$, where $\bq\in\Cset^{N_t}$ is the beamforming vector on the \gls{BS} and $s\in\Cset$ is the data symbol. Assuming $\Ex\{|s|^2\}=1$, the beamformer satisfies the transmit power constraint $\|\bq\|^2\leq P_t$ with $P_t$ denoting the maximum transmit power. Moreover,  
$\bh_{d,g}\in\Cset^{N_t}, \bH_t\in\Cset^{N\times N_t}$, and $\bh_{r,g}\in\Cset^{N}$ denote the \gls{BS}-\{MU, ME\}, BS-RIS, and RIS-\{MU, ME\} channel matrices, respectively, where $g\in\{u,e\}$. In addition, we assume that the direct channels between the \gls{BS} and both the legitimate mobile user and the mobile eavesdropper are blocked in the coverage area, i.e., $\bh_{d,u}\approx\boldsymbol{0}_{N_t}$ and $\bh_{d,e}\approx\boldsymbol{0}_{N_t}$. This scenario is the primary motivation for utilizing the \gls{RIS} to establish a reliable communication link. Furthermore, $\bGamma\in\Cset^{N\times N}$ is a diagonal matrix with main diagonal entries $[\bGamma]_n=[\bOmega]_n\e^{\jj[\bomega]_n}$ denoting the reflection coefficient applied by the $n$th RIS unit cell comprising phase shift $[\bomega]_n$  and reflection amplitude $[\bOmega]_n$. There are not usually significant amplitude variations on \gls{LC}-\gls{RIS}s within a given narrow frequency band \cite{yang2020design} so we assume $[\bOmega]_n\approx1,\,\forall n$.

We assume that due to the use of higher frequencies, \gls{LOS} links dominate the communication between the \gls{RIS}, \gls{BS}, and the legitimate user. Therefore, we adopt a Rician channel model \cite{jamali2023impact} with a high $K$-factor, indicating the relative power of the \gls{LOS} component compared to the \gls{nLOS} components. This can be applied to each channel component $\bH_t$, $\bh_{r,g}$, and $\bh_{d,g},\, g\in\{u,e\}$ with suitable modifications.

\subsection{LC phase shifter}
\label{sec: LC phase shifter}
In this subsection, we assume a fixed temperature and explain how LC molecules introduce a phase shift into the impinging wave. Based on this discussion, in Section \ref{Theory behind the temperature impact on LC}, we develop a model accounting for the impact of the temperature. LC-RISs control how signals reflect by leveraging the unique electromagnetic properties of LC molecules, which can be reoriented using an electric field ($\vec{E}_{\rm RF}$) \cite{jimenez2023reconfigurable}. This reorientation changes the permittivity of the LC, which in turn affects the phase shift induced by each RIS element. LC molecules have an elongated, rod-like shape, and their electromagnetic properties change depending on whether the electric field is aligned with their major or minor axis. When the electric field is aligned with the major axis, the permittivity is higher, resulting in a greater phase shift, see Fig. \ref{fig:V_phase} for an illustration. By controlling the orientation of these molecules via the applied voltage controlled by the applied electrical field, the RIS can adjust the way signals reflect, allowing it to create a programmable wireless environment.
\begin{figure}[tbp]
	\centering
    \includegraphics[width=0.45\textwidth]{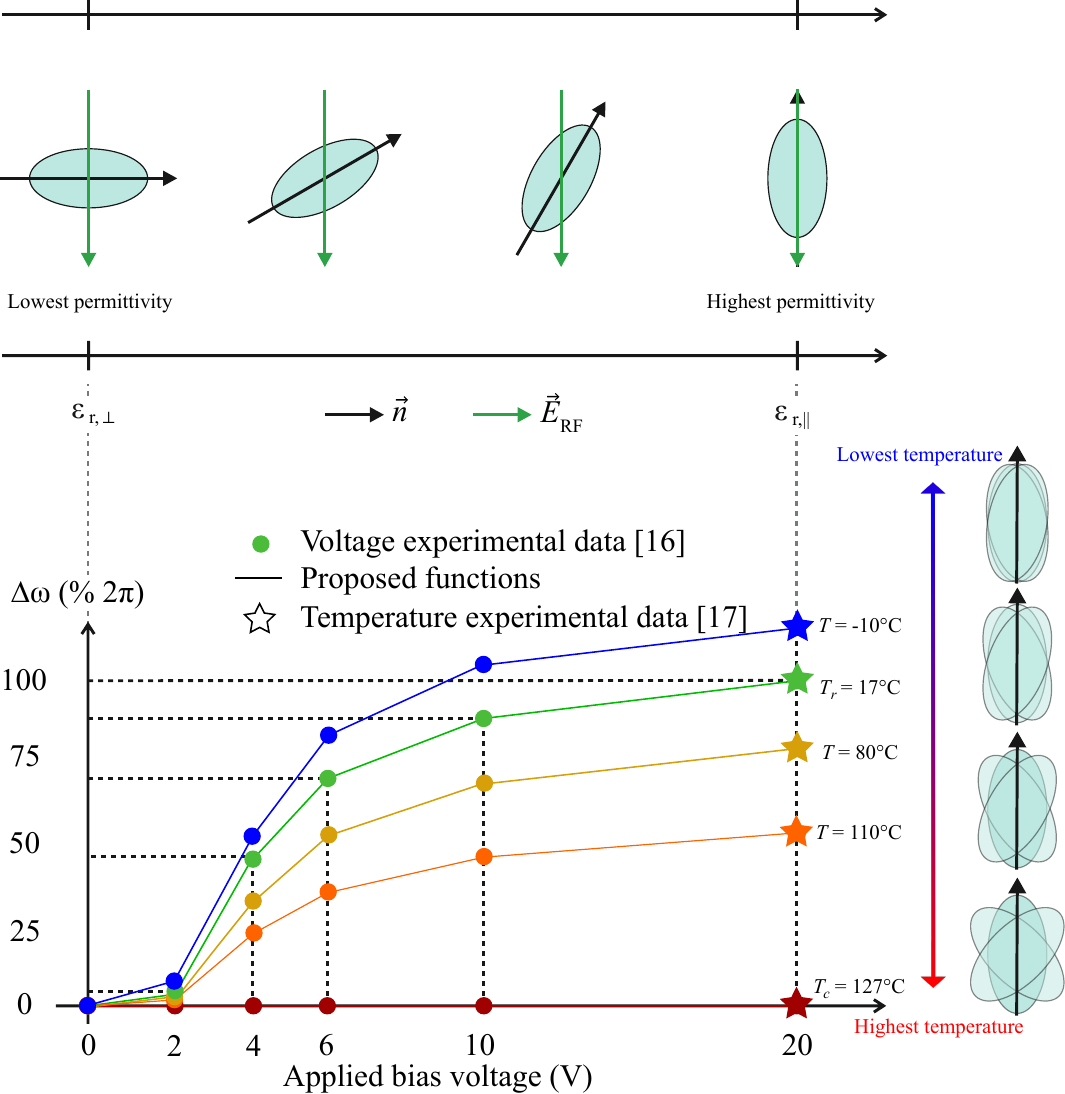}
    \caption{The relationship between phase shift and applied voltage for an \gls{LC} at different temperatures are depicted. The experimental data is taken from \cite{neuder2024architecture,tesmer2021temperature}, a piece-wise linear function is modeled similar to \cite{delbari2024fast}.}
    \vspace{-5mm}
    \label{fig:V_phase}
\end{figure}

The maximum range of phase shift that an \gls{LC} unit can achieve depends on several factors such as the length of the \gls{LC} cell, the difference in
permittivity between different molecular orientations, and the operating frequency, i.e.,
\begin{equation}
    \Delta\omega_{\max}=2\pi l(\sqrt{\varepsilon_{r,\parallel}}-\sqrt{\varepsilon_{r,\perp}})\frac{f}{c},
    \label{eq:omega epsilon}
\end{equation}
where $l$, $f$, and $c$ are the phase shifter length, frequency, and speed of the light in vacuum, respectively. $\varepsilon_{r,\parallel}$ and $\varepsilon_{r,\perp}$ are the maximum and the minimum relative permittivity, respectively. They correspond to when an electrical field is aligned or perpendicular to the vector $\vec n$ in Fig. \ref{fig:V_phase}. 
The value of $\varepsilon_{r,\parallel}$ and $\varepsilon_{r,\perp}$ are scaled with temperature, which leads to a temperature-dependent phase-shift response for \gls{LC}-\gls{RIS}s, as will be discussed in detail in Section \ref{Temperature-Aware LC-RIS Phase-shift Design}.
\subsection{Secrecy rate}
The secrecy rate is a crucial metric used to quantify the performance of a physical-layer secure communication system. It is defined as the difference between the achievable rates of the legitimate user and the eavesdropper \cite{Cheng2023secure}: 
\begin{IEEEeqnarray}{ll}
\label{eq: SNR}
    \RS=[\log(1+\SNR_u)-\log(1+\SNR_e)]^+,\\
    \SNR_u=\frac{|(\bh_u^\eff)^\Herm\bq|^2}{\sigma^2_n},\\
    \SNR_e=\frac{|(\bh_e^\eff)^\Herm\bq|^2}{\sigma^2_n},
\end{IEEEeqnarray}
where $(\bh_g^\eff)^\Herm=\bh_{d,g}^\Herm + \bh_{r,g}^\Herm \bGamma \bH_t,\,g=\{u,e\}$ is the end-to-end channel from the \gls{BS} to the legitimate user and eavesdropper.

In this paper, our goal is to maximize the secrecy rate by optimizing the \gls{RIS} phase shifts and \gls{BS} beamformer, thus the system can create a favorable communication environment for the legitimate user while suppressing the signal strength in the area of the eavesdropper. We design the \gls{RIS} phase-shift based on the approximate locations of both the legitimate user and the eavesdropper, with the following benefits:
\begin{itemize}
    \item Unlike most literature that assumes the eavesdropper's \gls{CSI} is fully known, we assume only that the eavesdropper’s location lies within a specified area, $\bp_e\in\Pset_e$. The size of this area, $|\Pset_e|$, accounts for potential errors in estimating this location.
    \item Similarly, for the legitimate user, we assume only that their location is within a specific area, $\bp_u\in\Pset_u$. The area size, $|\Pset_e|$, can be adjusted not only to account for location estimation error but also to reduce reconfiguration overhead by decreasing the frequency of required \gls{RIS} adjustments.
\end{itemize}
The \gls{RIS} phase shifts will be designed based on the {LOS} paths to optimize the communication, as it offers the most significant contribution to signal strength at \gls{mmWave} bands, where \gls{LC}-\gls{RIS}s are expected to be adopted.

\section{Temperature-Aware LC-RIS Phase-shift Design}
\label{Temperature-Aware LC-RIS Phase-shift Design}
In this section, we start by introducing a model to describe the impact of the temperature on the \gls{LC}-\gls{RIS} phase shifts. Then, we formulate an optimization problem for phase-shift design aimed at maximizing the secure rate for a given temperature.

\subsection{Theory behind the temperature impact on LC}
\label{Theory behind the temperature impact on LC}


In this section, we analyze how temperature variations affect the maximum phase shift generated by an individual \gls{LC} cell. Let us define $\Delta n\triangleq(\sqrt{\varepsilon_{r,\parallel}}-\sqrt{\varepsilon_{r,\perp}})$ in \eqref{eq:omega epsilon}, which is known in the literature as birefringence \cite{Wang2005}. With changing the temperature, $\Delta n$ changes leading to a different phase shift. The dependency of $\Delta n$ on temperature is characterized by the Haller approximation \cite{haller1975thermodynamic}:
\begin{equation}
\label{eq: delta n}
    \Delta n=\Delta n_0\Big(1-\frac{T}{T_c}\Big)^\beta,
\end{equation}
where $\Delta n_0$ is birefringence at the reference temperature, denoted by $T_r$, $T$ is the current temperature and $T_c$ is the clearing temperature of the \gls{LC}, which is defined as the temperature at which the \gls{LC} becomes isotropic, i.e., \gls{LC} molecule becomes completely disordered. Moreover, $\beta$ is a constant which typically is around $0.2-0.25$ for many \gls{LC} materials \cite{Wang2005}. 
Let us assume at $T_r$, \gls{LC} cells produce $2\pi$ differential phase shift. Exploiting \eqref{eq:omega epsilon} and \eqref{eq: delta n}, it can be obtained:
\begin{equation}
    \Delta\omega_{\max}(T)=2\pi l\Delta n_0\Big(1-\frac{T}{T_c}\Big)^\beta\frac{f}{c},
    \label{eq:omega temperature}
\end{equation}
 where $\Delta\omega_{\max}(T_r)=2\pi$. Equivalently, \eqref{eq:omega temperature} can be rewritten as

 \begin{equation}
    \Delta\omega_{\max}(T)=2\pi\Big(\frac{T_c-T}{T_c-T_r}\Big)^\beta.
    \label{eq:omega temperature final}
 \end{equation}
Therefore, when the temperature decreases from $T_r$, the maximum achievable phase shift increases. In this case, $\Delta\omega_{\max}$ exceeds $2\pi$, allowing the phase shifts to be adjusted back to those corresponding to $T_r$ by tuning the voltages. However, with increasing temperature, the maximum achievable phase shift decreases. This case will be investigated in the following section.

\subsection{Temperature-aware phase-shift design}
Next, assume the setup in Fig. \ref{fig:system model} where \gls{RIS} wishes to maximize the secure rate defined in \eqref{eq: SNR} for the user in the worst case. In other words, the secure rate (SR) must be maximized regardless of the exact legitimate and eavesdropper's locations while they are inside of their own defined areas. We formulate the following problem formulation based on the only \gls{LOS} links:
\begin{subequations}
\label{eq:optimization 1}
\begin{align}
    \text {P1:}\quad&~\underset{\bomega,\bq,\alpha}{\max}~\alpha
    \\&~\text {s.t.} ~~\RS\geq \alpha,\, \forall \bp_u\in\Pset_u,\,\forall\bp_e\in\Pset_e
    \\&\quad\hphantom {\text {s.t.} } 0\leq [\bomega]_n < \omega_\tmax, \forall n,
    \\&\quad\hphantom {\text {s.t.} } \|\bq\|_2^2\leq P_t.
\end{align}
\end{subequations}
Here, (\ref{eq:optimization 1}b) is the secure rate constraint defined in \eqref{eq: SNR}, (\ref{eq:optimization 1}c) is a constraint over phase shifts which is rooted in the temperature as we discussed in \ref{Theory behind the temperature impact on LC}, and (\ref{eq:optimization 1}d) forces the transmit power. In addition, $\alpha$ describes the worst case for secure rate and we try to maximize this parameter. The problem P1 is non-convex due to the non-convexity in the constraint (\ref{eq:optimization 1}b). Two vector variables $\bq$ and $\bomega$ are also coupled in this constraint making it more challenging to derive a global solution for this problem. In the following, we divide this problem into two sub-problems and try to maximize the cost function at each step via \gls{AO}.\\

\textbf{Beamformer design:} In this step, we assume $\bomega$ is fixed and the only variable of the problem is $\bq$. In addition, since the LOS link is the dominant path at higher frequencies, we design the beamformer based on the \gls{LOS} link. This assumption is often valid especially because both the BS and RIS are positioned at elevated locations above the ground. Based on this assumption, $\bH_t$ can be decomposed as:
\begin{equation}
\label{eq: H_t}
    \bH_t=c_0\ba_\text{RIS}(\bp_\BS)\ba^\Herm_\BS(\bp_\RIS),
\end{equation}
where $\ba_\RIS(\cdot)$ and $\ba^\Herm_\BS(\cdot)$ are the steering vectors at the \gls{RIS} and \gls{BS}, respectively, and $\|\ba_\RIS(\cdot)\|=\|\ba^\Herm_\BS(\cdot)\|=1$. Due to the large \gls{LC}-\gls{RIS}, we assume \gls{NF} model for $\ba_\RIS(\cdot)$ \cite{delbari2024nearfield}. Moreover, $\bp_\BS$ and $\bp_\RIS$ are the locations of \gls{BS} and \gls{RIS}, respectively. $c_0$ denotes the channel attenuation factor of the \gls{LOS} link. By assuming a fixed $\bomega$, the problem P1 reduces to the following sub-problem:
\begin{subequations}
\label{eq:optimization 2}
\begin{align}
    \text {P2:}\quad&~\underset{\bq,\alpha}{\max}~\alpha
    \\&~\text {s.t.} ~~\RS\geq \alpha,\, \forall \bp_u\in\Pset_u,\,\forall\bp_e\in\Pset_e,
    \\&\quad\hphantom {\text {s.t.} } \|\bq\|_2^2\leq P_t.
\end{align}
\end{subequations}
The optimal beamformer is obtained using the following lemma.
\begin{lem}
\label{lemma beamforming}
    Assuming a dominant \gls{LOS} link, fixed \gls{RIS} phase shifts, and blocked direct channels for both the legitimate user and the eavesdropper, the optimal beamformer for P2 is given by $\bq=\sqrt{P_t}\ba_\BS(\bp_\RIS)$.
\end{lem}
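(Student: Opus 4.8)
The key observation is that with a dominant LOS link and blocked direct channels, the effective end-to-end channel for each receiver $g\in\{u,e\}$ factorizes through the common BS-RIS steering structure. The plan is to substitute the rank-one decomposition \eqref{eq: H_t} into $(\bh_g^\eff)^\Herm = \bh_{r,g}^\Herm \bGamma \bH_t = c_0\,\bh_{r,g}^\Herm\bGamma\,\ba_\RIS(\bp_\BS)\,\ba_\BS^\Herm(\bp_\RIS)$. Defining the scalar $\mu_g \triangleq c_0\,\bh_{r,g}^\Herm\bGamma\,\ba_\RIS(\bp_\BS)$, which depends only on the (fixed) phase shifts and the RIS-$g$ channel, we get $(\bh_g^\eff)^\Herm\bq = \mu_g\,\ba_\BS^\Herm(\bp_\RIS)\bq$ for both $g=u$ and $g=e$. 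Hence $\SNR_g = |\mu_g|^2\,|\ba_\BS^\Herm(\bp_\RIS)\bq|^2/\sigma_n^2$, so \emph{both} SNRs — and therefore the secrecy rate \eqref{eq: SNR} and the worst-case value $\alpha$ in P2 — are monotonically increasing in the single scalar quantity $t(\bq)\triangleq|\ba_\BS^\Herm(\bp_\RIS)\bq|^2$.

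First I would make this monotonicity precise: since $\RS = [\log(1+\SNR_u)-\log(1+\SNR_e)]^+$ with $\SNR_u = |\mu_u|^2 t(\bq)/\sigma_n^2$ and $\SNR_e = |\mu_e|^2 t(\bq)/\sigma_n^2$, the difference $\log(1+a t)-\log(1+b t)$ is nondecreasing in $t\ge 0$ whenever $a\ge b$ (its derivative is $(a-b)/((1+at)(1+bt))\ge 0$), and this holds for every admissible pair of locations $\bp_u\in\Pset_u$, $\bp_e\in\Pset_e$ because the factorization above is valid pointwise in the locations. Therefore maximizing $\alpha$ in P2 is equivalent to maximizing $t(\bq)$ subject to $\|\bq\|_2^2\le P_t$. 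Second, this is a standard matched-filter problem: by Cauchy–Schwarz, $|\ba_\BS^\Herm(\bp_\RIS)\bq|^2 \le \|\ba_\BS(\bp_\RIS)\|_2^2\,\|\bq\|_2^2 = \|\bq\|_2^2 \le P_t$, using $\|\ba_\BS(\bp_\RIS)\|=1$, with equality iff $\bq$ is aligned with $\ba_\BS(\bp_\RIS)$ and the power constraint is active, i.e. $\bq = \sqrt{P_t}\,e^{\jj\phi}\ba_\BS(\bp_\RIS)$. The global phase $\phi$ is immaterial (it cancels in all SNR expressions), so one may take $\phi=0$, giving $\bq=\sqrt{P_t}\,\ba_\BS(\bp_\RIS)$.

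The only subtlety — and the step I would treat most carefully — is the degenerate case $\mu_u = \mu_e$ (or more generally $|\mu_u|\le|\mu_e|$), where increasing $t(\bq)$ does not strictly help and the secrecy rate may be identically zero regardless of $\bq$. Even then, $\bq=\sqrt{P_t}\,\ba_\BS(\bp_\RIS)$ remains \emph{an} optimal beamformer (it attains the supremum of $\alpha$, which is $0$ in that sub-case), so the claim still holds; I would state this explicitly to confirm optimality is not lost. I would also note that the monotonicity argument is what couples the otherwise competing objectives (boosting the user while suppressing the eavesdropper): the rank-one LOS structure forces the eavesdropper's and user's received powers to scale together through the BS beamformer, so the RIS phase shifts $\bomega$ — optimized in the other AO block — carry the entire burden of creating the SNR gap via the ratio $|\mu_u|/|\mu_e|$, while the beamformer's sole role is to deliver maximum power into the RIS along the BS-RIS LOS direction. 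This completes the proof.
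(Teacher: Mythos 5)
Your proposal is correct and follows essentially the same route as the paper: substitute the rank-one LOS decomposition of $\bH_t$ so that both SNRs factor into a fixed location-dependent gain times the common scalar $|\ba_\BS^\Herm(\bp_\RIS)\bq|^2$, argue monotonicity of the secrecy rate in that scalar when the user's gain dominates, and conclude with the matched-filter choice $\bq=\sqrt{P_t}\ba_\BS(\bp_\RIS)$. Your explicit derivative computation, Cauchy--Schwarz step, and treatment of the degenerate case $|\mu_u|\le|\mu_e|$ are just slightly more detailed versions of the paper's corresponding remarks (including its ``otherwise $\RS=0$ independent of $\mu$'' observation).
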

\begin{proof}
 Substituting $\bH_t$ from \eqref{eq: H_t} into the secure rate definition \eqref{eq: SNR} and neglecting the $\bh_{d,g},\,g=\{u,e\}$ due to blockage yields:
\begin{equation}
    \RS=\Big[\log\big(\frac{1+\overbrace{|\bh_{r,u}\bGamma c_0\ba_\RIS(\bp_\BS)|^2/\sigma^2_n}^{\zeta_u(\bp_u)}\times   \overbrace{|\ba^\Herm_\BS(\bp_\RIS)\bq|^2}^{\mu}}{1+\underbrace{|\bh_{r,e}\bGamma c_0\ba_\RIS(\bp_\BS)|^2/\sigma^2_n}_{\zeta_e(\bp_e)}\times   \underbrace{|\ba^\Herm_\BS(\bp_\RIS)\bq|^2}_{\mu}}\big)\Big]^+,
\end{equation}
where $\zeta_u(\bp_u)$ and $\zeta_e(\bp_e)$ are fixed due to the given RIS phase shifts and $\mu\in\Rset$ is a controllable scalar in terms of $\bq$. When $\zeta_u(\bp_u)>\zeta_e(\bp_e),\,\forall \bp_u\in\Pset_u, \forall \bp_e\in\Pset_e$, $\RS$ is monotonically increasing in $\mu$. The maximum $\mu$ is $P_t$ when we set $\bq=\sqrt{P_t}\ba_{\BS}(\bp_{\RIS})$ \cite{tse2005fundamentals}. Otherwise, when there exists at least one pair $(\bp_u,\bp_e)$, $\bp_u\in\Pset_u$ and $\bp_e\in\Pset_e$, such that $\zeta_u(\bp_u)\leq\zeta_e(\bp_e)$, then we have $\RS=0$ independent of $\mu$. Therefore, without loss of generality, we can choose $\bq=\sqrt{P_t}\ba_{\BS}(\bp_{\RIS})$ which concludes the proof.
\end{proof}

\textbf{RIS design:} Assuming a fixed beamformer, we maximize the secrecy rate by optimizing the \gls{RIS} phase shifts. The phase shift configuration subproblem is given by:
\begin{subequations}
\label{eq:optimization 3}
\begin{align}
    \text {P3:}\quad&~\underset{\bomega,\alpha}{\max}~\alpha
    \\&~\text {s.t.} ~~\text{C1:~}\RS\geq \alpha,\, \forall \bp_u\in\Pset_u,\,\forall\bp_e\in\Pset_e
    \\&\quad\hphantom {\text {s.t.} } \text{C2:~} 0\leq [\bomega]_n < \omega_\tmax, \forall n.
\end{align}
\end{subequations}
Problem P3 is non-convex due to the non-convexity of constraint (\ref{eq:optimization 3}b) over $\omega_n,\,\forall n$. Without loss of generality, we drop $[\cdot]^+$ from $\RS$ in \eqref{eq: SNR} and aim to maximize secure rate by P3. If $\RS>0$ for the obtained solution, then dropping $[\cdot]^+$ has no impact. Otherwise, if $\RS<0$, it implies that the secure rate is zero. By a new definition $\alpha\defeq\log(\gamma)$, constraint (\ref{eq:optimization 3}b) can be rewritten as follows:
\begin{equation}
    \frac{1+\SNR_u}{1+\SNR_e}\geq\gamma\Rightarrow (1-\gamma)+(\SNR_u-\gamma\SNR_e)\geq0.
\end{equation}
 Here, we can also decompose each $\SNR_u$ and $\SNR_e$ in terms of a vector including exponential of \gls{RIS} phase shifts $\bs\defeq[\e^{\jj[\bomega]_1}, \cdots, \e^{\jj[\bomega]_N}]^\Trans$. With this assumption, we have:
\begin{subequations}
    \label{eq: SNR in term of s}
    \begin{align}
        \SNR_u=&\bs^\Herm\bA_u\bs,\\
        \SNR_e=&\bs^\Herm\bA_e\bs,
    \end{align}
\end{subequations}
where $\bA_g=\frac{\diag(\bh_{r,g}^\Herm)\bH_t\bq\bq^\Herm\bH_t^\Herm\diag(\bh_{r,g})}{\sigma_n^2},\,g=\{u,e\}$. Hence, constraint C1 in P3 can be reformulated as
\begin{equation}
    \widehat{\text{C1}}: (1-\gamma)+\bs^\Herm(\bA_u(\bp_u)-\gamma\bA_e(\bp_e))\bs\geq0,\, \forall \bp_u\in\Pset_u,\,\forall\bp_e\in\Pset_e.
\end{equation}
After applying these reformulations in P3 and because logarithm is a monotone-increasing function, the problem P3 can change to $\widehat{\text{P3}}$ in the following:
\begin{subequations}
\label{eq:optimization 3-1}
\begin{align}
\widehat{\text{P3}}:\quad&~\underset{\bomega,\gamma}{\max}~\gamma
    \\&~\text {s.t.} ~~\widehat{\text{C1}},\text{  C2}.
\end{align}
\end{subequations}
 To tackle the non-convexity of the problem, we transform $\widehat{\text{P3}}$ into a \gls{SDP} problem by defining $\bS\defeq\bs\bs^\Herm$, and $\bA(\gamma)\defeq\bA_u(\bp_u)-\gamma\bA_e(\bp_e)$ where $\bA(\gamma)$ is a function of $\bp_u,\,\bp_e,$ and $\gamma$:
\begin{subequations}
\label{eq:optimization 4}
\begin{align}
    \text {P4:}&~\underset{\bS,\gamma}{\max}~\gamma
    \\&~\text {s.t.} ~~\widehat{\widehat{\text{C1}}}:\tr(\bA(\gamma)\bS)\geq\gamma-1, \forall \bp_u\in\Pset_u,\,\forall\bp_e\in\Pset_e,
    \\&\quad\hphantom {\text {s.t.} } \text{C2: } 0\leq [\bomega]_n < \omega_\tmax, \forall n,
    \\&\quad\hphantom {\text {s.t.} }\text{C3: } \bS\succeq 0,\text{C4: } \rank(\bS)=1, \text{C5: }\diag(\bS)=\bone_N.
\end{align}
\end{subequations}
This problem is still non-convex due to the non-convexity in C2 and C4 in terms of $\bS$, and being coupled $\gamma$ and $\bS$ in $\widehat{\widehat{\text{C1}}}$. We resolve the non-convexity of each one in the following.
\subsubsection{Rank constraint C4} To tackle this issue, we adopt the penalty method exploited in \cite{Yu2020power,ghanem2022optimization,delbari2024far}.
The basic idea is to replace the rank constraint with inequality $\|\bS\|_*-\|\bS\|_2 \leq 0$, which holds only if $\bS$ has rank smaller or equal to one. While the new constraint is still non-convex, one can apply the first-order Taylor approximation to make it convex. Let $\bS^{(i)}$ denotes the value of matrix $\bS$ in the $i$th iteration. According to the first-order Taylor approximation:
\begin{equation}
\label{eq: taylor approximation}
    \|\bS\|_2\geq\|\bS^{(i)}\|_2+\tr\big(\blambda_{\max}(\bS^{(i)})\blambda_{\max}^\Herm(\bS^{(i)})(\bS-\bS^{(i)})\big),
\end{equation}
By adopting the penalty method \cite{Yu2020robust} and applying \eqref{eq: taylor approximation} into the cost function of P4, we have
\begin{subequations}
\label{eq:optimization 5}
\begin{align}
    \text {P5:}\quad&~\underset{\bS,\gamma}{\max}~\gamma-\eta^{(i)}\Big(\|\bS\|_*-\|\bS^{(i)}\|_2-\tr\big(\blambda_{\max}(\bS^{(i)})\nonumber
    \\&\quad\quad\quad\times\blambda_{\max}^\Herm(\bS^{(i)})(\bS-\bS^{(i)})\big)\Big)
    \\&~\text {s.t.} ~~\widehat{\widehat{\text{C1}}}, \text{C2, C3, C5}.
\end{align}
\end{subequations}
Here, $\eta^{(i)}$ is the penalty factor at iteration $i$ which increases gradually. By selecting a sufficiently large $\eta$, problems P4 and P5 become equivalent. In the following, we will address the non-convexity of C2 \gls{w.r.t.} $\bS$.
\subsubsection{C2: $0\leq [\bomega]_n < \omega_\tmax,\,\forall n$} While constraint C2 is linear in $[\omega]_n, \forall n$, it is highly non-convex in the new variable $\bS$. To resolve this issue, we extract features of $\bS$ that are informative about $\omega_{\max}$ and can be used to enforce C2. We first prove two lemmas and based on them, we reformulate this constraint to a non-equality in terms of $\bS$ satisfying C2. 
\begin{lem}
\label{lemma integral}
    Let us assume $\omega_n$ is uniformly distributed in interval $[0,\omega_{\max}]$. For sufficiently large $N$, matrix $\bS$ that satisfies the constraints C2-C5, also satisfies:
    \begin{equation}
    \sum_{i=1}^N [\bS]_{n,i}=\frac{\e^{\jj[\bomega]_n}N(1-\e^{-\jj\omega_{\max}})}{{\jj\omega_{\max}}},\forall n.
    \end{equation}
\end{lem}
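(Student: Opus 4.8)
The plan is to exploit the rank-one, unit-modulus structure that constraints C3--C5 impose on $\bS$, and then invoke a law-of-large-numbers (Monte-Carlo) argument on the resulting sum of phasors. First I would observe that C3 ($\bS\succeq 0$) together with C4 ($\rank(\bS)=1$) forces $\bS=\bs\bs^\Herm$ for some $\bs\in\Cset^N$, while C5 ($\diag(\bS)=\bone_N$) forces $|[\bs]_n|=1$, so that $[\bs]_n=\e^{\jj[\bomega]_n}$ with $[\bomega]_n\in[0,\omega_{\max})$ by C2. Hence $[\bS]_{n,i}=\e^{\jj[\bomega]_n}\e^{-\jj[\bomega]_i}$, and summing the $n$th row over $i$ gives $\sum_{i=1}^N[\bS]_{n,i}=\e^{\jj[\bomega]_n}\sum_{i=1}^N\e^{-\jj[\bomega]_i}$.

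Next, using the lemma's hypothesis that the $[\bomega]_i$ behave like i.i.d.\ samples uniform on $[0,\omega_{\max}]$, I would approximate the empirical average of the phasors by its expectation: for large $N$,
\[
\frac{1}{N}\sum_{i=1}^N\e^{-\jj[\bomega]_i}\;\approx\;\Ex\{\e^{-\jj\omega}\}=\frac{1}{\omega_{\max}}\int_0^{\omega_{\max}}\e^{-\jj\omega}\,\dd\omega=\frac{1-\e^{-\jj\omega_{\max}}}{\jj\omega_{\max}}.
\]
Substituting $\sum_{i=1}^N\e^{-\jj[\bomega]_i}\approx N(1-\e^{-\jj\omega_{\max}})/(\jj\omega_{\max})$ back into the row-sum identity yields exactly $\sum_{i=1}^N[\bS]_{n,i}=\e^{\jj[\bomega]_n}N(1-\e^{-\jj\omega_{\max}})/(\jj\omega_{\max})$, which is the claim. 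I would also remark that the diagonal term $i=n$ (carrying the ``self-phase'' $\e^{-\jj[\bomega]_n}$) contributes only $\bigO(1/N)$ to the average and is therefore harmless, which is why the same formula holds simultaneously for every $n$.

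The main, and essentially the only, obstacle is making the phrase ``for sufficiently large $N$'' precise: the claimed equality is an asymptotic/statistical one rather than an exact algebraic identity, so a fully rigorous version would have to quantify the deviation $\big|\tfrac1N\sum_{i}\e^{-\jj[\bomega]_i}-\Ex\{\e^{-\jj\omega}\}\big|$ --- e.g.\ via a Hoeffding-type concentration bound applied separately to the real and imaginary parts of $\e^{-\jj\omega}$ --- and argue that it is negligible relative to the terms retained when this identity is later used to reformulate C2. Everything else is routine: the factorization bookkeeping from C3--C5 and the elementary evaluation of $\int_0^{\omega_{\max}}\e^{-\jj\omega}\,\dd\omega$.
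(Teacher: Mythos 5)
Your proposal is correct and follows essentially the same route as the paper: factor $\bS=\bs\bs^\Herm$ from C3--C5, write the row sum as $\e^{\jj[\bomega]_n}\sum_i\e^{-\jj[\bomega]_i}$, and replace the empirical phasor average by $\frac{1}{\omega_{\max}}\int_0^{\omega_{\max}}\e^{-\jj\omega}\,\dd\omega$ via the law of large numbers. Your added remarks on the $\bigO(1/N)$ diagonal term and on quantifying the ``sufficiently large $N$'' claim go slightly beyond the paper's proof, which simply cites the law of large numbers, but the argument is the same.
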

\begin{proof}
    Recalling $\bS=\bs\bs^\Herm$ with $\bs=[\e^{\jj[\bomega]_1},\,\cdots,\,\e^{\jj[\bomega]_N}]$, we have
    \begin{align}
        \sum_{i=1}^N [\bS]_{n,i}&=\e^{\jj[\bomega]_n}\sum_{i=1}^N \e^{-\jj[\bomega]_i}=\e^{\jj[\bomega]_n}N\sum_{i=1}^N\frac{\e^{-\jj[\bomega]_i}}{N}\\
        &\stackrel{(a)}{=}\frac{\e^{\jj[\bomega]_n}N}{\omega_{\max}} \int_{0}^{\omega_{\max}}\e^{-\jj\omega} \dd\omega=\frac{\e^{\jj[\bomega]_n}N(1-\e^{-\jj\omega_{\max}})}{\jj\omega_{\max}},
    \end{align}
    where $(a)$ holds based on law of large numbers \cite{papoulis2002probability}.
\end{proof}

\begin{lem}
\label{lemma constraint C2}
    The constraint $0\leq[\bomega]_n\leq\omega_\tmax,$
    , where $\pi<\omega_{\max}<2\pi$, is equivalent to
     $\real([\bs]_n)+\tan(\frac{\omega_{\max}}{2})\imag([\bs]_n)\leq1$,
     where $[\bs]_n=\e^{\jj[\bomega]_n}$ and $\real(\cdot)$ and $\imag(\cdot)$ denote the real and imaginary parts, respectively.
\end{lem}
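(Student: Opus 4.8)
The plan is to regard the left-hand side of the claimed inequality as a single sinusoid in the scalar phase $\omega\defeq[\bomega]_n$ (read modulo $2\pi$, i.e. $\omega\in[0,2\pi)$) and to read off the feasible interval from the monotonicity of the cosine, the only delicate point being a sign.

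First I would make the left-hand side explicit. Since $[\bs]_n=\e^{\jj\omega}$, write
\begin{equation}
f(\omega)\defeq\real([\bs]_n)+\tan\!\Big(\tfrac{\omega_{\max}}{2}\Big)\imag([\bs]_n)=\cos\omega+\tan\!\Big(\tfrac{\omega_{\max}}{2}\Big)\sin\omega .
\end{equation}
Multiplying by $\cos(\tfrac{\omega_{\max}}{2})$ and applying the angle-difference identity $\cos A\cos B+\sin A\sin B=\cos(A-B)$ yields the key identity
\begin{equation}
\label{eq:plan-key}
f(\omega)\cos\!\Big(\tfrac{\omega_{\max}}{2}\Big)=\cos\!\Big(\omega-\tfrac{\omega_{\max}}{2}\Big).
\end{equation}
The decisive observation is then the sign of the multiplier: the hypothesis $\pi<\omega_{\max}<2\pi$ gives $\tfrac{\pi}{2}<\tfrac{\omega_{\max}}{2}<\pi$, hence $\cos(\tfrac{\omega_{\max}}{2})<0$. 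Dividing \eqref{eq:plan-key} by this negative number reverses the inequality, so $f(\omega)\le 1$ is equivalent to $\cos(\omega-\tfrac{\omega_{\max}}{2})\ge\cos(\tfrac{\omega_{\max}}{2})$.

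To finish I would invoke that $\cos$ is even and strictly decreasing on $[0,\pi]$: for $\alpha\defeq\tfrac{\omega_{\max}}{2}\in(\tfrac{\pi}{2},\pi)$ and $\psi$ restricted to any half-open period, $\cos\psi\ge\cos\alpha$ holds iff $\psi$ lies within angular distance $\alpha$ of $0$. Taking $\psi=\omega-\tfrac{\omega_{\max}}{2}$, which ranges over $[-\tfrac{\omega_{\max}}{2},\,2\pi-\tfrac{\omega_{\max}}{2})$ as $\omega$ ranges over $[0,2\pi)$, and using $\alpha<\pi$ so that $[-\alpha,\alpha]$ lies inside that range, the condition reduces to $-\tfrac{\omega_{\max}}{2}\le\omega-\tfrac{\omega_{\max}}{2}\le\tfrac{\omega_{\max}}{2}$, i.e. $0\le\omega\le\omega_{\max}$, which is exactly C2; both directions come out simultaneously, giving the equivalence. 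Geometrically this is just the fact that, since the complementary arc $\{\e^{\jj\omega}:\omega\in(\omega_{\max},2\pi)\}$ has length $2\pi-\omega_{\max}<\pi$, the admissible points $\e^{\jj\omega}$ are exactly those on the side of the origin of the chord joining $\e^{\jj 0}$ and $\e^{\jj\omega_{\max}}$, and $\real(\cdot)+\tan(\tfrac{\omega_{\max}}{2})\imag(\cdot)=1$ is precisely the equation of that chord. I expect no genuine obstacle beyond the sign of $\cos(\tfrac{\omega_{\max}}{2})$ and the bookkeeping that $[\bomega]_n$ is interpreted modulo $2\pi$; note that the restriction $\pi<\omega_{\max}<2\pi$ is essential, since for $\omega_{\max}\le\pi$ the same half-plane would bound the complementary (minor) arc and the stated equivalence would fail.
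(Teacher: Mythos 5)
Your proof is correct, and it takes a different route from the paper's. The paper moves $\cos([\bomega]_n)$ to the right-hand side, applies the half-angle identities $\sin\omega=2\sin(\omega/2)\cos(\omega/2)$ and $1-\cos\omega=2\sin^2(\omega/2)$, divides out $\sin(\omega/2)$, and then splits into two cases according to the sign of $\cos(\omega/2)$ (i.e.\ $\omega$ below or above $\pi$), using monotonicity of $\tan$ on each branch. You instead multiply through by $\cos(\omega_{\max}/2)$, collapse the left-hand side into the single cosine $\cos(\omega-\omega_{\max}/2)$ via the angle-difference identity, observe that the multiplier is negative precisely because $\pi<\omega_{\max}<2\pi$, and finish with one monotonicity argument for $\cos$. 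The two arguments rest on the same hypothesis in different guises --- the paper needs $\tan(\omega_{\max}/2)<0$ to dispose of its Case 1, you need $\cos(\omega_{\max}/2)<0$ to flip the inequality --- but your version avoids the case split and the division by $\sin(\omega/2)$ (which vanishes at the endpoints $\omega\in\{0,2\pi\}$, a boundary issue the paper glosses over), and it handles both directions of the equivalence in a single chain. The chord/half-plane interpretation you give also makes transparent why the constraint $\widehat{\text{C2}}$ derived from this lemma is an affine, hence convex, condition, and why the hypothesis $\omega_{\max}>\pi$ cannot be dropped; neither point is made explicit in the paper's proof.
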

\begin{proof}
    We start by noting that since $[\bs]_n=\e^{\jj[\bomega]_n}$, we can express $\real([\bs]_n)+\tan(\frac{\omega_{\max}}{2})\imag([\bs]_n)$ as $\cos([\bomega]_n)+\tan(\frac{\omega_{\max}}{2})\sin([\bomega]_n)$. By substituting this into the inequality, it yields that
    \begin{align}
        \tan(\frac{\omega_{\max}}{2})\sin([\bomega]_n)\leq1-\cos([\bomega]_n).
    \end{align}
    Using the known trigonometric equations $\sin([\bomega]_n)=2\sin(\frac{[\bomega]_n}{2})\cos(\frac{[\bomega]_n}{2})$ and $1-\cos([\bomega]_n)=2\sin^2(\frac{[\bomega]_n}{2})$, we can transform the inequality into:
    \begin{align}
        2\tan(\frac{\omega_{\max}}{2})\sin(\frac{[\bomega]_n}{2})\cos(\frac{[\bomega]_n}{2})\leq2\sin^2(\frac{[\bomega]_n}{2}).
    \end{align}
    We can divide out $\sin(\frac{[\bomega]_n}{2})$ from both sides of the inequality since $\sin(\frac{[\bomega]_n}{2})\geq0$ for $0\leq[\bomega]_n\leq2\pi$. Consequently, the inequality simplifies to
    \begin{align}
        \tan(\frac{\omega_{\max}}{2})\cos(\frac{[\bomega]_n}{2})\leq\sin(\frac{[\bomega]_n}{2}).
    \end{align}
    This is equivalent to
    \begin{align}
       \begin{cases}
            \textbf{Case 1:} \tan(\frac{\omega_{\max}}{2})\leq\tan(\frac{[\bomega]_n}{2}),\,\,0<[\bomega]_n<\pi,\\
            \textbf{Case 2:} \tan(\frac{\omega_{\max}}{2})\geq\tan(\frac{[\bomega]_n}{2}),\,\,\pi<[\bomega]_n<2\pi.
        \end{cases}
    \end{align}
    Case 1 always holds because $\tan(\frac{\omega_{\max}}{2})<0$ when $\pi<\omega_{\max}<2\pi$ while $\tan(\frac{[\bomega]_n}{2})>0$. Case 2 is satisfied as long as $[\bomega]_n\leq\omega_{\max}$, thus concluding the proof.
\end{proof}
Based on Lemma~\ref{lemma integral} and Lemma~\ref{lemma constraint C2}, we can conclude that the constraint C2 can be written as
\begin{equation}
    \widehat{\text{C2}}:\real(\zeta\sum_{i=1}^N [\bS]_{n,i})+\tan(\frac{\omega_{\max}}{2})\imag(\zeta\sum_{i=1}^N [\bS]_{n,i})\leq1, \forall n,
\end{equation}
where $\zeta\defeq\frac{\jj\omega_{\max}}{N(1-\e^{-\jj\omega_{\max}})}$. Unlike C2, constraint $\widehat{\text{C2}}$ is convex in $\bS$. These two constraints are equivalent under three key assumptions: $(i)$ a sufficiently large number of RIS elements ($N$), $(ii)$ a uniform distribution of phase shifts in the interval $[0,\,\omega_{\max}]$, and $(iii)$ $\pi<\omega_{\max}<2\pi$. From our observations, $N$ greater than $100$ is sufficient for this approximation. While a uniform distribution is not generally guaranteed, the individual phase-shifts tend to be rather random in \gls{NF} regime when \gls{RIS} serves an area, making the assumption reasonable. Regarding the third assumption, experimental results confirm that even at higher temperatures typical in outdoor scenarios, $\omega_{\max}$ does not drop below $\pi$ \cite{tesmer2021temperature}.

Therefore, problem P5 is transformed into the following optimization problem.
\begin{subequations}
\label{eq:optimization 6}
\begin{align}
    \text {P6:}\quad&~\underset{\bS,\gamma}{\max}~\gamma-\eta^{(i)}\Big(\|\bS\|_*-\|\bS^{(i)}\|_2-\tr\big(\blambda_{\max}(\bS^{(i)})\nonumber
    \\&\quad\quad\quad\times\blambda_{\max}^\Herm(\bS^{(i)})(\bS-\bS^{(i)})\big)\Big)
    \\&~\text {s.t.} ~~\widehat{\widehat{\text{C1}}}, \widehat{\text{C2}},\text{ C3, C5}.
\end{align}
\end{subequations}

\subsubsection{Coupled $\gamma$ and $\bS$ in $\widehat{\widehat{\text{C1}}}$} To address the coupling of $\bS$ and $\gamma$ in $\widehat{\widehat{\text{C1}}}$, we employ \gls{AO}, where one variable is fixed while the other is optimized. On one hand, when $\gamma$ is fixed, the optimization problem P6 becomes convex in terms of the matrix $\bS$ because the objective function is concave, and the constraints define a convex set. Therefore, it can be efficiently solved using standard convex optimization solvers such as CVX \cite{cvx}. On the other hand, when the matrix $\bS$ is fixed, the problem P6 simplifies to the following optimization problem:
\begin{subequations}
\label{eq:optimization 6 gamma}
\begin{align}
    \widehat{\text {P6:}}\quad&~\underset{\gamma}{\max}~\gamma
    \\&~\text {s.t.} ~~\widehat{\widehat{\text{C1}}}.
\end{align}
\end{subequations}
The problem $\widehat{\text {P6}}$ is linear in terms of $\gamma$ and its solution is given by:
\begin{equation}
\label{eq: best gamma}
    \gamma=\underset{\forall \bp_u\in\Pset_u,\,\forall\bp_e\in\Pset_e}{\min}~\frac{\tr(\bA_u(\bp_u)\bS)+1}{\tr(\bA_e(\bp_e)\bS)+1}.
\end{equation}
\begin{algorithm}[t]
\caption{Proposed Algorithm for Problem P6}\label{alg:cap}
\begin{algorithmic}[1]
\STATE \textbf{Initialize:} $\omega_{\max}$, $\bs^{(0)}=\e^{\jj\omega_{\max}\times\mathrm{rand}(N)},\bS^{(0)}=\bs^{(0)}{\bs^{(0)}}^\Herm$.
\WHILE{$|\log_2(\gamma^{(j)})-\log_2(\gamma^{(j-1)})|\geq\epsilon_2$ and $j\leq J_{\max}$}
    \WHILE{$\|\bS^{(i)}-\bS^{(i-1)}\|_F^2\geq\epsilon_1$ and $i\leq I_{\tmax}$}
    \STATE Solve convex P6 for given $\bS^{(i-1)}$ and $\gamma$, and store the intermediate solution $\bS$.
    \STATE Set $i = i + 1$ and update $\bS^{(i)}=\bS$ and $\eta^{(i)} =5\eta^{(i-1)}$.
    \ENDWHILE
    \STATE Calculate new $\gamma$ according to the \eqref{eq: best gamma} and set $j = j + 1$.
\ENDWHILE
\end{algorithmic}
\end{algorithm} 
\textbf{Algorithm and complexity analysis:} The proposed algorithm is summarized in Algorithm \ref{alg:cap}. At each iteration, the most computational step is the computation of the nuclear norm in line 4, which has a complexity of $\bigO(N^3)$. The number of different constraints generated by $\widetilde{\text{C1}}$ is the bottleneck and proportional to $|\Pset_u||\Pset_e|$. Thus, the complexity of the Algorithm \ref{alg:cap} in total is $\bigO(I_{\max}J_{\max}|\Pset_u||\Pset_e|N^3)$.

\section{Performance Comparison}
\subsection{Simulation Setup}
We employ the simulation configuration for coverage extension presented in Fig. \ref{fig:system model}, where the \gls{RIS} center is the origin of the Cartesian coordinate system, i.e., $[0,0,0]~\text{m}$. We assume there is a legitimate user with a fixed area in $\Pset_u\in\{(\x,\y,\z):4.5~\text{m}\leq\x\leq 5.5~\text{m}, -0.5~\text{m}\leq\y\leq 0.5~\text{m}, \z=-5\}$. The \gls{BS} comprises a $4\times4=16$ \gls{UPA} located at $[30,0,5]~\text{m}$. The \gls{RIS} is a \gls{UPA} consisting of $N_\y\times N_\z=20\times10$ elements aligned to the $\y$ and $\z$ axes, respectively. The element space for both the \gls{BS} and \gls{RIS} is half of the wavelength. The noise variance is computed as $\sigma_n^2=WN_0N_{\rm f}$ with $N_0=-174$~dBm/Hz, $W=20$~MHz, and $N_{\rm f}=6$~dB. We assume $28$~GHz carrier frequency, and $\rho(d_0/d)^\sigma$ pathloss model where $\rho=-61$~dB at $d_0=1$~m. Moreover, we adopt $\sigma = (2,2,2)$ and  $K=(0,10,10)$ for the \gls{BS}-\gls{MU}, \gls{BS}-\gls{RIS}, and \gls{RIS}-\gls{MU} channels, respectively. 

The analysis considers two scenarios based on the location of eavesdropper \gls{w.r.t.} to the legitimate user area. These scenarios are critical because the eavesdropper's location affects how the \gls{RIS} can optimize the secure rate communication.
\begin{itemize}
    \item \textbf{Scenario 1: Horizontal Case (H):} Here, the eavesdropper is positioned horizontally away from the legitimate user. In this case, the eavesdropper area is at a \underline{different y-coordinate} but the \underline{same x-coordinate} as the legitimate user. This implies that the eavesdropper is positioned to the left or right of the legitimate user area.
    \item \textbf{Scenario 2: Vertical Case (V):} In this case, the eavesdropper is positioned vertically away from the legitimate user. The eavesdropper area has the \underline{same y-coordinate} as the user but is at a \underline{different x-coordinate}, meaning eavesdropper is positioned above or below the legitimate user area.
\end{itemize}
Furthermore, two approaches are considered in managing the \gls{RIS} phase shifts: \textbf{Neglecting (N)} temperature changes, where \underline{no adjustment} is made, and \textbf{Optimizing (O)} the phase shifts to \underline{account for} temperature variations to ensure secure communication. The other parameters used in the simulations are as follows: $\beta=0.25$, $T_c=127~^{\circ}$C, $T_r=17~^{\circ}$C, $P_t=40~$dBm, $T=57~^{\circ}$C, $\eta^{(0)}=0.01$, $I_{\max}=12$, $J_{\max}=4$, $\epsilon_1=0.01$, $\epsilon_2=0.1$, and $\gamma^{(0)}=10^3$ are assumed.
 \begin{figure}[t]
    \centering
    \includegraphics[width=0.4\textwidth]{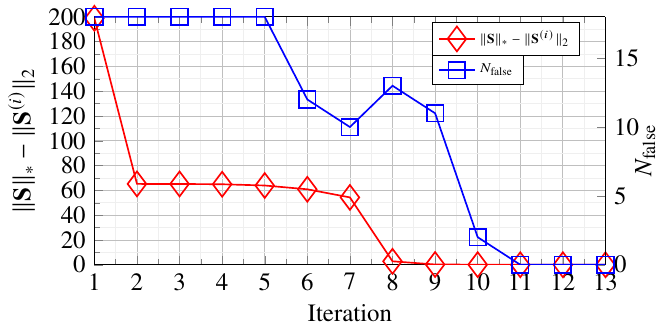}
    \caption{The value of $\|\bS\|_*-\|\bS^{(i)}\|_2$ and $N_\mathrm{false}$ at each iteration}
    \label{fig:convergence}
    \vspace{-5mm}
\end{figure}

\subsection{Simulation Result}
First, we study the convergence behavior of the Algorithm \ref{alg:cap} in Fig.~\ref{fig:convergence} for one example. In this figure, $N_\mathrm{false}$ is the number of the elements whose phase shifts are larger than $\omega_{\max}$ and violating C2. As the number of iterations increases, $\|\bS\|_* - \|\bS^{(i)}\|_2$ and $N_\mathrm{false}$ approach to zero ensuring that the rank-one constraint for $\bS$ and constraint C2 are met.

Fig. \ref{fig: heat map} shows the received signal power at various locations when the eavesdropper is positioned either to the left (horizontal) or below (vertical) the user area, maintaining a minimum distance of $1.5~$m. We use the notation (A, B), where A = \{N, O\} represents the scenarios of neglecting (N) or optimizing (O) \gls{RIS} phase shifts in response to temperature changes, and B = \{H, V\} denotes the eavesdropper's horizontal (H) or vertical (V) position relative to the user area. In Figs. \ref{fig: heat map NH} and \ref{fig: heat map NV} neglecting (N) temperature effects results in higher received power in the eavesdropper area (red rectangle). Conversely, optimizing (O) the RIS phase shifts significantly reduces the received power in the eavesdropper area as shown in Figs. \ref{fig: heat map OH} and \ref{fig: heat map OV}.

\begin{figure}[t]
\centering
\begin{subfigure}{0.24\textwidth}
    \caption{Scenario 1, Benchmark (N,H).}
    \includegraphics[width=\textwidth,height=0.7\textwidth]{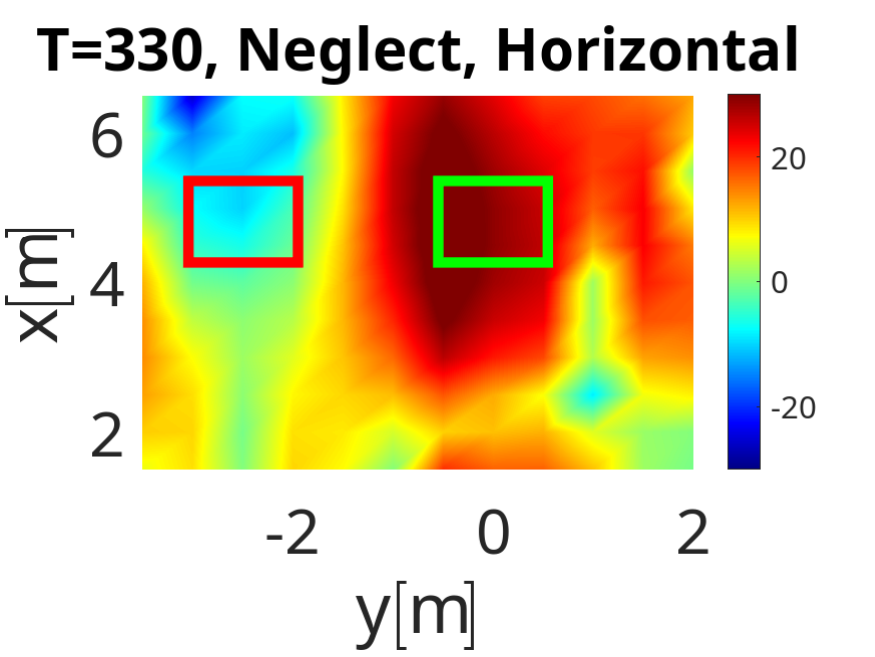}
    \label{fig: heat map NH}
\end{subfigure}
\begin{subfigure}{0.24\textwidth}
    \caption{Scenario 1, Proposed (O,H).}
    \includegraphics[width=\textwidth,height=0.7\textwidth]{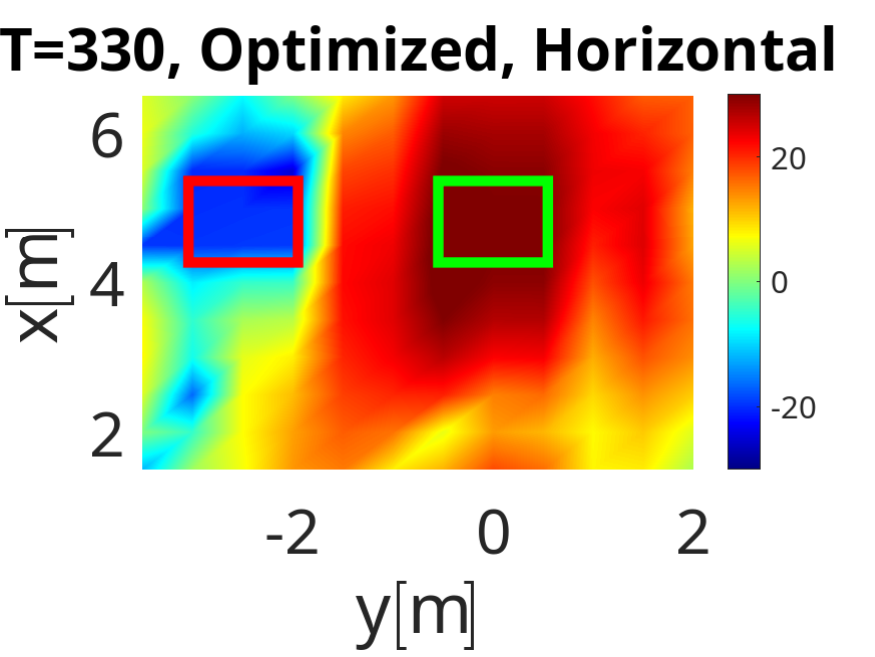}
    \label{fig: heat map OH}
\end{subfigure}
\begin{subfigure}{0.24\textwidth}
    \caption{Scenario 2, Benchmark (N,V).}
   \includegraphics[width=\textwidth,height=0.7\textwidth]{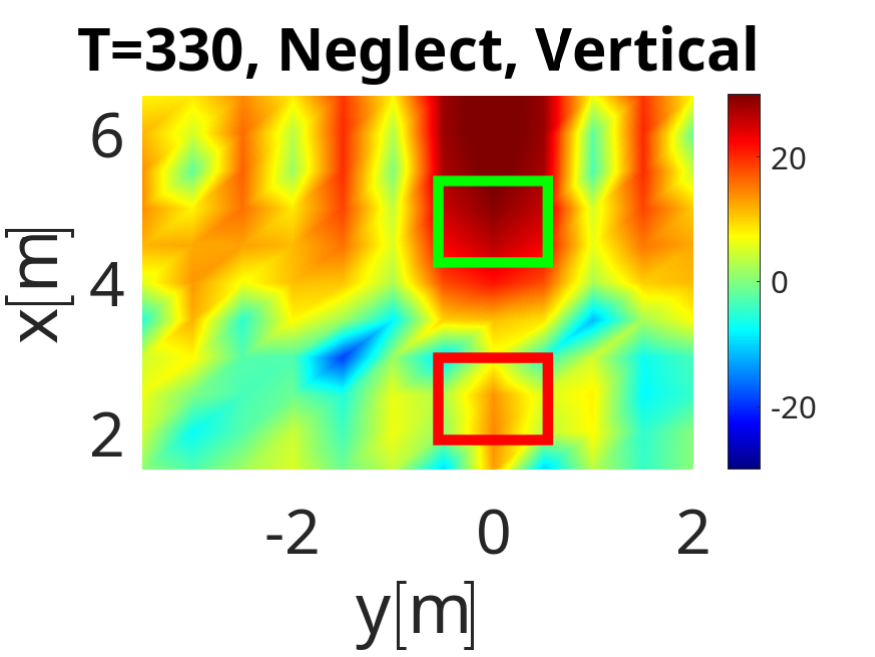}
    \label{fig: heat map NV}
\end{subfigure}
\begin{subfigure}{0.24\textwidth}
    \caption{Scenario 2, Proposed (O,V).}
    \includegraphics[width=\textwidth,height=0.7\textwidth]{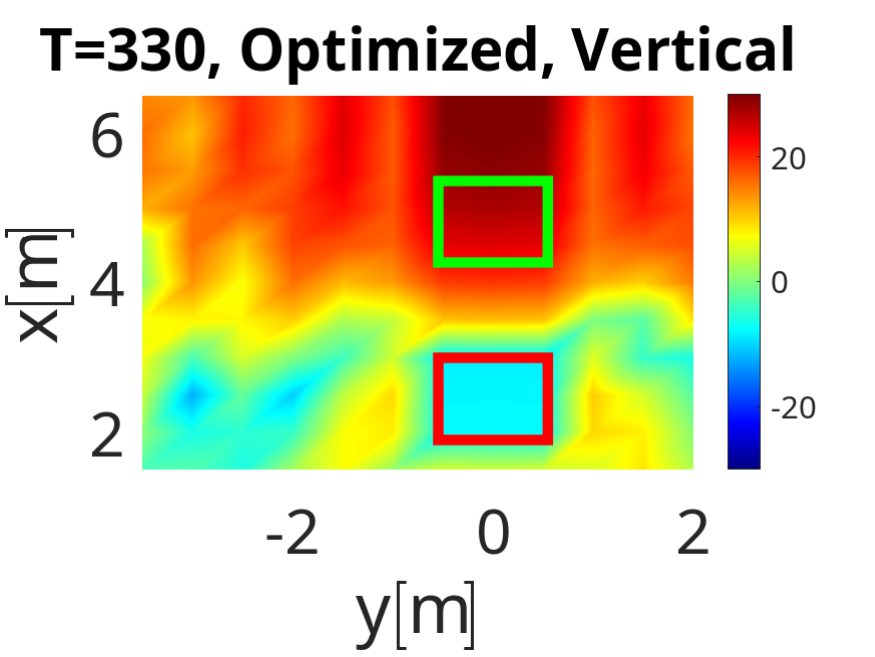}
    \label{fig: heat map OV}
\end{subfigure}

\caption{Received power (dB) for two cases. The first row includes results when the area of the eavesdropper is in a different $\y$ position, and the bottom row demonstrates results for the case eavesdropper's area is at the bottom of the user area.}
\label{fig: heat map}
\vspace{-5mm}
\end{figure}

In Fig. \ref{fig: RS-distance}, we plot the secure rate $\log_2(\gamma)$ as a function of the distance between the eavesdropper and user areas, the secure rate rises up. It shows that the secure rate increases as the distance between the user and the eavesdropper grows, whether the separation is vertical or horizontal. However, at the same distance, the secure rate is higher when the eavesdropper is positioned horizontally (same $x$, different $y$) compared to vertically (same $y$, different $x$). This is because the \gls{RIS} has more elements along the $y$-axis, allowing for better control of the reflected signals and reducing leakage when the eavesdropper is horizontally displaced.

\begin{figure}[t]
    \centering
    \includegraphics[width=0.45\textwidth]{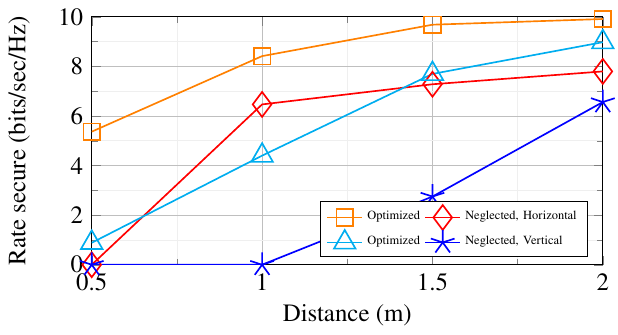}
    \caption{The rate secure (bits/sec/Hz) versus the distance of nearest points in eavesdropper and user areas when the eavesdropper's area is on the left or bottom of the user's area.}
    \label{fig: RS-distance}
    \vspace{-5mm}
\end{figure}

\section{Conclusion}
In this paper, we first have the impact of changing the temperature on the phase shift of the \gls{RIS}. Moreover, we have developed an algorithm to cope with the phase shift range limitation due to the increasing temperature leading to the maximizing secure rate. The simulation results demonstrated the necessity of considering the changing temperature in \gls{LC}-\gls{RIS} phase shift design.

\bibliographystyle{IEEEtran}
\bibliography{References}

\begin{thebibliography}{10}
\providecommand{\url}[1]{#1}
\csname url@samestyle\endcsname
\providecommand{\newblock}{\relax}
\providecommand{\bibinfo}[2]{#2}
\providecommand{\BIBentrySTDinterwordspacing}{\spaceskip=0pt\relax}
\providecommand{\BIBentryALTinterwordstretchfactor}{4}
\providecommand{\BIBentryALTinterwordspacing}{\spaceskip=\fontdimen2\font plus
\BIBentryALTinterwordstretchfactor\fontdimen3\font minus \fontdimen4\font\relax}
\providecommand{\BIBforeignlanguage}[2]{{%
\expandafter\ifx\csname l@#1\endcsname\relax
\typeout{** WARNING: IEEEtran.bst: No hyphenation pattern has been}%
\typeout{** loaded for the language `#1'. Using the pattern for}%
\typeout{** the default language instead.}%
\else
\language=\csname l@#1\endcsname
\fi
#2}}
\providecommand{\BIBdecl}{\relax}
\BIBdecl

\bibitem{qingqing2019IRS}
Q.~Wu and R.~Zhang, ``Intelligent reflecting surface enhanced wireless network via joint active and passive beamforming,'' \emph{IEEE Tran. Wireless Commun.}, vol.~18, no.~11, pp. 5394--5409, 2019.

\bibitem{di2019smart}
M.~Di~Renzo \emph{et~al.}, ``Smart radio environments empowered by {AI} reconfigurable meta-surfaces: {An} idea whose time has come,'' \emph{EURASIP J. Wireless Commun. and Netw.}, vol. 129, May 2019.

\bibitem{yu2021smart}
X.~Yu \emph{et~al.}, ``Smart and reconfigurable wireless communications: From {IRS} modeling to algorithm design,'' \emph{IEEE Wireless Commun.}, vol.~28, no.~6, pp. 118--125, 2021.

\bibitem{najafi2020physics}
M.~Najafi, V.~Jamali, R.~Schober, and H.~V. Poor, ``Physics-based modeling and scalable optimization of large intelligent reflecting surfaces,'' \emph{IEEE Trans. Commun.}, vol.~69, no.~4, pp. 2673--2691, 2020.

\bibitem{zografopoulos2019liquid}
D.~C. Zografopoulos, A.~Ferraro, and R.~Beccherelli, ``Liquid-crystal high-frequency microwave technology: materials and characterization,'' \emph{Advanced Materials Tech.}, vol.~4, no.~2, p. 1800447, 2019.

\bibitem{aboagye2022design}
S.~Aboagye, A.~R. Ndjiongue, T.~M. Ngatched, and O.~A. Dobre, ``Design and optimization of liquid crystal {RIS}-based visible light communication receivers,'' \emph{IEEE Photonics J.}, vol.~14, no.~6, pp. 1--7, 2022.

\bibitem{neuder2023compact}
R.~Neuder \emph{et~al.}, ``Compact liquid crystal-based defective ground structure phase shifter for reconfigurable intelligent surfaces,'' in \emph{European Conf. Antennas and Propag. (EuCAP)}, 2023, pp. 1--5.

\bibitem{jimenez2023reconfigurable}
A.~Jim{\'e}nez-S{\'a}ez \emph{et~al.}, ``Reconfigurable intelligent surfaces with liquid crystal technology: A hardware design and communication perspective,'' \emph{preprint arXiv:2308.03065}, 2023.

\bibitem{wang2004correlations}
H.~Wang \emph{et~al.}, ``Correlations between liquid crystal director reorientation and optical response time of a homeotropic cell,'' \emph{Journal of Applied Physics}, vol.~95, no.~10, pp. 5502--5508, 2004.

\bibitem{Wang2005}
H.~Wang, ``Studies of liquid crystal response time,'' Ph.D. dissertation, University of Central Florida, 2005.

\bibitem{delbari2024fast}
M.~Delbari \emph{et~al.}, ``Fast transition-aware reconfiguration of liquid crystal-based {RIS}s,'' in \emph{IEEE International Conf. Commun. Workshops (ICC Workshops)}, 2024, pp. 214--219.

\bibitem{yang2020design}
J.~Yang \emph{et~al.}, ``Design and experimental verification of a liquid crystal-based terahertz phase shifter for reconfigurable reflectarrays,'' \emph{J. of Infrared, Millimeter, and THz Waves}, vol.~41, pp. 665--674, 2020.

\bibitem{jamali2023impact}
V.~Jamali, W.~Ghanem, R.~Schober, and H.~V. Poor, ``Impact of channel models on performance characterization of {RIS}-assisted wireless systems,'' in \emph{Proc. European Conf. Ant. Prop. (EuCAP)}, Florence, Italy, 2023, pp. 1--5.

\bibitem{neuder2024architecture}
R.~Neuder, M.~Sp{\"a}th, M.~Sch{\"u}{\ss}ler, and A.~Jim{\'e}nez-S{\'a}ez, ``Architecture for sub-100 ms liquid crystal reconfigurable intelligent surface based on defected delay lines,'' \emph{Commun. Engineering}, vol.~3, no.~1, p.~70, 2024.

\bibitem{tesmer2021temperature}
H.~Tesmer \emph{et~al.}, ``Temperature characterization of liquid crystal dielectric image line phase shifter for millimeter-wave applications,'' \emph{Crystals}, vol.~11, no.~1, p.~63, 2021.

\bibitem{Cheng2023secure}
Z.~Cheng \emph{et~al.}, ``{RIS}-assisted secure communications: Low-complexity beamforming design,'' \emph{IEEE Wireless Commun. Lett.}, vol.~12, no.~6, pp. 1012--1016, 2023.

\bibitem{haller1975thermodynamic}
I.~Haller, ``Thermodynamic and static properties of liquid crystals,'' \emph{Progress in solid state chemistry}, vol.~10, pp. 103--118, 1975.

\bibitem{delbari2024nearfield}
\BIBentryALTinterwordspacing
M.~Delbari \emph{et~al.}, ``Near-field multipath {MIMO} channel model for imperfect surface reflection,'' in \emph{IEEE Global Conf. Commun. Workshops (Globecom Workshops)}, 2024, accepted for publication. [Online]. Available: \url{https://arxiv.org/pdf/2409.17041}
\BIBentrySTDinterwordspacing

\bibitem{tse2005fundamentals}
D.~Tse, ``Fundamentals of wireless communication,'' \emph{Cambridge University Press google schola}, vol.~2, pp. 614--624, 2005.

\bibitem{Yu2020power}
X.~Yu \emph{et~al.}, ``Power-efficient resource allocation for multiuser miso systems via intelligent reflecting surfaces,'' in \emph{IEEE Global Commun. Conf. (GLOBECOM)}, 2020, pp. 1--6.

\bibitem{ghanem2022optimization}
W.~R. Ghanem \emph{et~al.}, ``Optimization-based phase-shift codebook design for large {IRS}s,'' \emph{IEEE Commun. Lett.}, 2022.

\bibitem{delbari2024far}
M.~Delbari, G.~C. Alexandropoulos, R.~Schober, and V.~Jamali, ``{Far- versus Near-Field RIS Modeling and Beam Design},'' in \emph{Reconfigurable Metasurfaces for Wireless Communications: Architectures, Modeling, and Optimization}.\hskip 1em plus 0.5em minus 0.4em\relax Springer, 2024, arXiv preprint: \url{https://arxiv.org/pdf/2401.08237}.

\bibitem{Yu2020robust}
X.~Yu \emph{et~al.}, ``Robust and secure wireless communications via intelligent reflecting surfaces,'' \emph{IEEE J. Selected Areas in Commun.}, vol.~38, no.~11, pp. 2637--2652, 2020.

\bibitem{papoulis2002probability}
A.~Papoulis, \emph{{Probability, Random Variables and Stochastic Processes}}.\hskip 1em plus 0.5em minus 0.4em\relax Boston: McGraw-Hill, 2002.

\bibitem{cvx}
M.~Grant and S.~Boyd, ``{CVX}: Matlab software for disciplined convex programming, version 2.1,'' \url{https://cvxr.com/cvx}, Mar. 2014.

\end{thebibliography}

\end{document}